\documentclass[11pt]{article}

\usepackage[T1]{fontenc}
\usepackage[utf8]{inputenc}
\usepackage{lmodern}
\usepackage{microtype}
\usepackage[a4paper,margin=1in]{geometry}
\usepackage{amsmath,amssymb,mathtools,bm}
\usepackage{amsthm}
\usepackage{booktabs}
\usepackage{array}
\usepackage{float}
\usepackage{xcolor}
\usepackage{enumitem}
\usepackage{lineno}
\usepackage{cite}
\usepackage{hyperref}

\hypersetup{
  colorlinks=true,
  linkcolor=blue!55!black,
  citecolor=blue!55!black,
  urlcolor=blue!55!black,
  pdftitle={Superradiance versus Spectral Instability in Static Two-Dimensional Black Holes},
  pdfauthor={Wen-Xiang Chen}
}

\allowdisplaybreaks
\modulolinenumbers[5]

\newtheorem{theorem}{Theorem}

\newtheorem{corollary}{Corollary}

\newcommand{\dd}{\mathrm{d}}
\newcommand{\ii}{\mathrm{i}}
\newcommand{\ee}{\mathrm{e}}
\newcommand{\cD}{\mathcal{D}}

\newcommand{\cR}{\mathcal{R}}

\newcommand{\Res}{\operatorname*{Res}}
\newcommand{\Var}{\operatorname*{Var}}
\newcommand{\avg}[1]{\left\langle #1\right\rangle}
\newcommand{\abs}[1]{\left|#1\right|}

\title{Thermodynamic Regulation of Superradiance in a Charged Two-Dimensional Black Hole}

\author{
  Wen-Xiang Chen\thanks{Email: \href{mailto:chenwx@gcu.edu.cn}{chenwx@gcu.edu.cn}}\\[0.35em]
  \small School of Electronic Information, Guangzhou City University of Technology,\\
  \small Guangzhou 510800, China\\
  Yao-Guang Zheng\thanks{Email: \href{mailto:hesoyam12456@163.com}{hesoyam12456@163.com}}\\[0.35em]
  \small Department of Physics, College of Sciences, Northeastern University, Shenyang 110819, China
}

\date{}

\begin{document}
\maketitle

\begin{abstract}
Within the framework of gravitational thermodynamicization, we investigate the propagation of charged scalar fields in static two-dimensional black-hole spacetimes. Starting from the scalar-field action, we derive the exact radial wave equation and show that a neutral, massless, minimally coupled scalar field in a genuinely two-dimensional geometry exhibits neither an angular-momentum barrier nor superradiant amplification. For a charged scalar field, the condition for amplification is governed by the electrostatic potential evaluated at the event horizon. In the case of the charged two-dimensional string black hole, the horizon electrostatic potential is directly related to the Hawking temperature, implying that the superradiant frequency window is determined by the thermodynamic state of the black hole. We further formulate a near-horizon residue criterion that provides a coordinate-independent characterization of the superradiant threshold. When a reflecting outer boundary is imposed, necessary frequency conditions for unstable modes are derived, together with an upper bound on their growth rates. Numerical calculations verify the corresponding flux relation and clearly distinguish superradiant scattering from genuine superradiant instability.

\end{abstract}

\noindent\textbf{Keywords:} two-dimensional black holes; superradiance;
charged scalar fields; dilaton gravity; Wronskian; Laurent residue; spectral
instability

\section{Introduction}

Black-hole superradiance is a scattering process in which the reflected wave
carries more flux than the incident wave.  For rotating black holes, the
extracted quantity is rotational energy; for a static charged black hole, it
is electrostatic energy \cite{Bekenstein1973,Brito2020}.  Repeated
amplification becomes an instability only when an additional mechanism
confines the amplified radiation, as in the black-hole-bomb construction
\cite{PressTeukolsky1972,Cardoso2004,FuruhashiNambu2004,LiZhao2014}.  The
distinction between the following three statements is therefore essential:
\begin{enumerate}[label=(\roman*),leftmargin=2.2em]
  \item an effective potential is negative somewhere;
  \item a real-frequency scattering mode is superradiantly amplified;
  \item a boundary-value problem possesses a mode with
        $\operatorname{Im}\omega>0$.
\end{enumerate}
None of these statements is, by itself, equivalent to either of the other
two.

Two-dimensional dilaton-gravity models provide controlled laboratories for
black-hole physics \cite{Grumiller2002}.  Familiar examples include the
$SL(2,\mathbb R)/U(1)$ string black hole \cite{Witten1991}, the CGHS model
\cite{CGHS1992}, and charged string backgrounds
\cite{McGuigan1992,Giveon2003,GiveonKutasov2006}.  Scalar perturbations and
quasinormal spectra have also been studied in other $(1+1)$-dimensional
geometries \cite{Cruz2016}.  In this paper, ``two dimensional'' always means
one temporal plus one spatial dimension.  In particular, the BTZ black hole
is a $(2+1)$-dimensional object and is not used as a two-dimensional example.

There are two recurrent sources of confusion in a strictly
$(1+1)$-dimensional perturbation problem.  First, there is no two-sphere and
therefore no independent angular quantum number $\ell$.  An
$\ell(\ell+1)/r^2$ term is meaningful only when a higher-dimensional field is
decomposed into spherical harmonics before dimensional reduction.  Second, a
real negative potential can support a tachyonic or bound-state instability,
but a self-adjoint scattering problem with only positive-energy channels
cannot produce $\abs{\mathcal R}>1$.  Superradiance requires a horizon channel
whose conserved energy or charge flux has the opposite sign.

The contribution of this work is a unified derivation that makes these
distinctions quantitative.  We first obtain the exact effective potential for
a charged scalar carrying a general dilaton weight.  We then derive the
Wronskian relation and a neutral no-superradiance result.  Next, we encode the
horizon frequency shift in the residue of the meromorphic one-form
$\dd\ln\psi$, rather than assigning an unjustified Laurent series to a
generically multivalued wave function.  For the exact charged
$[SL(2,\mathbb R)\times U(1)]/U(1)$ background, this residue criterion combines
with the Hawking temperature to give a temperature-controlled amplification
edge.  Finally, an integrated complex-frequency identity produces a
variance criterion and a growth-rate bound for a reflecting cavity.

Our analysis is a fixed-background, linear probe calculation.  It does not
assert that every superradiant scattering frequency becomes unstable, nor
does it determine the nonlinear endpoint of a possible instability.  These
scope restrictions are important for keeping the conclusions directly tied
to the derived equations.

\section{Assumptions, conventions, and scalar action}
\label{sec:setup}

We use units $c=\hbar=k_B=1$ and metric signature $(-,+)$.  Consider the
static exterior geometry
\begin{equation}
  \dd s^2=-F(x)\,\dd t^2+\frac{\dd x^2}{F(x)},
  \label{eq:metric}
\end{equation}
where a nonextremal event horizon is located at $x=x_H$:
\begin{equation}
  F(x_H)=0,\qquad
  F'(x_H)=2\kappa>0,\qquad
  F(x)>0\quad (x>x_H).
  \label{eq:horizon}
\end{equation}
Here $\kappa$ is the surface gravity.  The background gauge field and a
positive scalar weight are
\begin{equation}
  A=A_t(x)\,\dd t,\qquad h(x)>0.
  \label{eq:gauge-weight}
\end{equation}
The weight $h$ may be constant in an intrinsic two-dimensional theory or may
encode a dilaton/area factor inherited from dimensional reduction.

Let $\Psi$ be a complex scalar of charge $e$, mass $\mu$, and curvature
coupling $\xi$.  Define
\begin{equation}
  \cD_\mu=\nabla_\mu+\ii e A_\mu,
  \qquad
  U(x)=\mu^2+\xi\cR(x),
  \label{eq:covder}
\end{equation}
where $\cR$ is the two-dimensional Ricci scalar.  The probe action is
\begin{equation}
  S_\Psi
  =-\int\dd^2x\,\sqrt{-g}\,h(x)
  \left[
    g^{\mu\nu}(\cD_\mu\Psi)^*\cD_\nu\Psi
    +U(x)\abs{\Psi}^2
  \right].
  \label{eq:action}
\end{equation}
The sign convention in \eqref{eq:covder} is chosen so that the separated
frequency appears as $\omega-eA_t$.  A simultaneous change
$eA_t\mapsto-eA_t$ leaves all physical statements unchanged.

Varying \eqref{eq:action} with respect to $\Psi^*$ and integrating the
kinetic term by parts gives
\begin{align}
  \delta_{\Psi^*}S_\Psi
  &=-\int\dd^2x\,\sqrt{-g}\,h
  \left[
    g^{\mu\nu}(\cD_\mu\delta\Psi)^*\cD_\nu\Psi
    +U\,\delta\Psi^*\Psi
  \right]
  \nonumber\\
  &=\int\dd^2x\,\delta\Psi^*
  \left[
    \cD_\mu\!\left(\sqrt{-g}\,h\,g^{\mu\nu}\cD_\nu\Psi\right)
    -\sqrt{-g}\,h\,U\Psi
  \right],
  \label{eq:variation}
\end{align}
after discarding the boundary variation.  Hence
\begin{equation}
  \frac{1}{\sqrt{-g}\,h}
  \cD_\mu\!\left(\sqrt{-g}\,h\,g^{\mu\nu}\cD_\nu\Psi\right)
  -U\Psi=0.
  \label{eq:weightedKG}
\end{equation}
Equation \eqref{eq:weightedKG}, rather than a guessed
Schrodinger equation, is our starting point.

\section{Exact reduction to a one-dimensional wave equation}
\label{sec:reduction}

For the metric \eqref{eq:metric},
\begin{equation}
  g_{\mu\nu}=
  \begin{pmatrix}
    -F&0\\[0.2em]0&F^{-1}
  \end{pmatrix},
  \qquad
  g^{\mu\nu}=
  \begin{pmatrix}
    -F^{-1}&0\\[0.2em]0&F
  \end{pmatrix},
  \qquad
  \sqrt{-g}=1.
  \label{eq:metriccomponents}
\end{equation}
Separate the time dependence as
\begin{equation}
  \Psi(t,x)=\ee^{-\ii\omega t}R(x).
  \label{eq:separation}
\end{equation}
Because
\begin{equation}
  \cD_t\Psi
  =(\partial_t+\ii eA_t)\Psi
  =-\ii\bigl(\omega-eA_t\bigr)\Psi,
  \label{eq:Dt}
\end{equation}
substitution of \eqref{eq:metriccomponents}--\eqref{eq:Dt} into
\eqref{eq:weightedKG} yields
\begin{equation}
  \frac{1}{h}\frac{\dd}{\dd x}
  \left(hF\frac{\dd R}{\dd x}\right)
  \left[
    \frac{(\omega-eA_t)^2}{F}-U
  \right]R=0.
  \label{eq:radialx}
\end{equation}

Introduce the tortoise coordinate
\begin{equation}
  \frac{\dd r_*}{\dd x}=\frac{1}{F(x)}.
  \label{eq:tortoise}
\end{equation}
We denote $\dd/\dd r_*$ by an overdot only in the present derivation.  Since
\begin{equation}
  \frac{\dd R}{\dd x}=\frac{\dot R}{F},
  \qquad
  hF\frac{\dd R}{\dd x}=h\dot R,
  \qquad
  \frac{\dd}{\dd x}(h\dot R)=\frac{1}{F}\frac{\dd}{\dd r_*}(h\dot R),
  \label{eq:tortoisesteps}
\end{equation}
multiplication of \eqref{eq:radialx} by $F$ gives
\begin{equation}
  \ddot R+\frac{\dot h}{h}\dot R
  +\left[(\omega-eA_t)^2-FU\right]R=0.
  \label{eq:Rfirstderivative}
\end{equation}

The first derivative is removed by the unique local rescaling
\begin{equation}
  \psi=\sqrt{h}\,R,\qquad R=h^{-1/2}\psi.
  \label{eq:rescale}
\end{equation}
Writing $H=\ln h$, direct differentiation gives
\begin{align}
  \dot R
  &=h^{-1/2}\left(\dot\psi-\frac{\dot H}{2}\psi\right),
  \label{eq:firstderivative}\\
  \ddot R
  &=h^{-1/2}\left[
    \ddot\psi-\dot H\dot\psi
    +\left(\frac{\dot H^2}{4}-\frac{\ddot H}{2}\right)\psi
  \right].
  \label{eq:secondderivative}
\end{align}
Adding $(\dot h/h)\dot R=\dot H\dot R$ to
\eqref{eq:secondderivative}, the $\dot\psi$ terms cancel:
\begin{equation}
  \ddot R+\frac{\dot h}{h}\dot R
  =h^{-1/2}
  \left[
    \ddot\psi
    -\left(\frac{\ddot H}{2}+\frac{\dot H^2}{4}\right)\psi
  \right].
  \label{eq:cancellation}
\end{equation}
Therefore the exact radial equation is
\begin{equation}
  \frac{\dd^2\psi}{\dd r_*^2}
  +\left[
    (\omega-eA_t)^2-V_{\rm eff}
  \right]\psi=0,
  \label{eq:schrodinger}
\end{equation}
with
\begin{equation}
  \boxed{
  V_{\rm eff}
  =F(\mu^2+\xi\cR)
  +\frac{1}{2}\frac{\dd^2\ln h}{\dd r_*^2}
  +\frac{1}{4}
   \left(\frac{\dd\ln h}{\dd r_*}\right)^2
  }.
  \label{eq:Veff}
\end{equation}

For an intrinsic, minimally coupled, neutral massless scalar,
\begin{equation}
  e=0,\qquad \mu=0,\qquad \xi=0,\qquad h=\text{constant}.
  \label{eq:strict2d}
\end{equation}
Every term in \eqref{eq:Veff} vanishes, and
\begin{equation}
  \frac{\dd^2\psi}{\dd r_*^2}+\omega^2\psi=0.
  \label{eq:free2d}
\end{equation}
This is the expected consequence of conformal flatness and the conformal
invariance of a massless scalar in two dimensions.  In particular, an
angular term $\ell(\ell+1)F/r^2$ cannot be appended to
\eqref{eq:free2d}.  Such a term belongs to a higher-dimensional harmonic
reduction; its origin is shown explicitly in Appendix~\ref{app:reduction}.

\section{Flux identity and the superradiant condition}
\label{sec:flux}

Assume that $\omega$ is real and that $A_t$ and $V_{\rm eff}$ are real.
Taking the complex conjugate of \eqref{eq:schrodinger}, multiplying the
original equation by $\psi^*$, multiplying the conjugate equation by
$\psi$, and subtracting, we obtain
\begin{equation}
  \frac{\dd}{\dd r_*}
  \left(
    \psi^*\frac{\dd\psi}{\dd r_*}
    -\psi\frac{\dd\psi^*}{\dd r_*}
  \right)=0.
  \label{eq:Wderivative}
\end{equation}
Thus the Wronskian
\begin{equation}
  W[\psi,\psi^*]
  =\psi^*\psi'-\psi\psi'^*,
  \qquad {}'=\frac{\dd}{\dd r_*},
  \label{eq:Wronskian}
\end{equation}
is constant.

Fix the asymptotic gauge by
\begin{equation}
  A_t(\infty)=0,\qquad
  \Phi_H=A_t(x_H)-A_t(\infty)=A_t(x_H).
  \label{eq:asymptoticgauge}
\end{equation}
Suppose also that $V_{\rm eff}\to V_\infty$ at spatial infinity and that
\begin{equation}
  k=\sqrt{\omega^2-V_\infty}>0.
  \label{eq:kdef}
\end{equation}
The scattering boundary conditions are
\begin{align}
  \psi
  &\sim
  \mathcal I\,\ee^{-\ii kr_*}
  +\mathcal R\,\ee^{+\ii kr_*},
  &&r_*\to+\infty,
  \label{eq:infinityBC}\\
  \psi
  &\sim
  \mathcal T\,\ee^{-\ii(\omega-e\Phi_H)r_*},
  &&r_*\to-\infty.
  \label{eq:horizonBC}
\end{align}
The sign in \eqref{eq:horizonBC} is fixed by regularity on the future
horizon.  Evaluating \eqref{eq:Wronskian} at infinity gives
\begin{equation}
  W_\infty
  =2\ii k\left(\abs{\mathcal R}^2-\abs{\mathcal I}^2\right),
  \label{eq:Winfinity}
\end{equation}
where the oscillatory cross terms cancel.  At the horizon,
\begin{equation}
  W_H
  =-2\ii(\omega-e\Phi_H)\abs{\mathcal T}^2.
  \label{eq:Whorizon}
\end{equation}
Equating \eqref{eq:Winfinity} and \eqref{eq:Whorizon}, and choosing unit
incident amplitude $\mathcal I=1$, yields
\begin{equation}
  \boxed{
  \abs{\mathcal R}^2
  =1-\frac{\omega-e\Phi_H}{k}\abs{\mathcal T}^2
  }.
  \label{eq:fluxidentity}
\end{equation}

\begin{theorem}[Static two-dimensional superradiant condition]
\label{thm:superradiance}
For the scattering problem \eqref{eq:schrodinger}--\eqref{eq:horizonBC},
with $k>0$ and a nonzero transmitted wave, amplification occurs if and only
if
\begin{equation}
  \omega-e\Phi_H<0.
  \label{eq:covfreqnegative}
\end{equation}
For $\omega>0$ and $e\Phi_H>0$, this becomes
\begin{equation}
  0<\omega<e\Phi_H.
  \label{eq:superradiantwindow}
\end{equation}
\end{theorem}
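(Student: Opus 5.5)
The proof reads the answer off the flux identity \eqref{eq:fluxidentity}, which has already been derived from the constancy of the Wronskian \eqref{eq:Wderivative}. We take ``amplification'' to mean $\abs{\mathcal R}>\abs{\mathcal I}$, and after normalising $\mathcal I=1$ this is $\abs{\mathcal R}^2>1$. Before using the identity, I will briefly recheck the two boundary evaluations \eqref{eq:Winfinity} and \eqref{eq:Whorizon} that feed into it.

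At infinity, insert $\psi=\mathcal I e^{-\ii kr_*}+\mathcal R e^{\ii kr_*}$ and $\psi'=-\ii k\mathcal I e^{-\ii kr_*}+\ii k\mathcal R e^{\ii kr_*}$ into $\psi^*\psi'-\psi\psi'^*$. The cross terms $\mathcal I^*\mathcal R e^{2\ii kr_*}$ and their conjugates appear with opposite signs and cancel. This cancellation uses that $k$ is real, which is guaranteed by \eqref{eq:kdef}. What remains is $2\ii k(\abs{\mathcal R}^2-\abs{\mathcal I}^2)$.

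At the horizon there is only a single exponential, so the Wronskian is simply $-2\ii(\omega-e\Phi_H)\abs{\mathcal T}^2$. Here one needs $\omega-e\Phi_H$ to be real, which holds because $\omega$ and $A_t$ are real. One also needs $(\omega-eA_t)^2-V_{\rm eff}\to(\omega-e\Phi_H)^2$ as $r_*\to-\infty$. This follows from $F\to0$ together with the assumption that the $h$-dependent terms in \eqref{eq:Veff} decay at the horizon. That decay is implicit in the boundary condition \eqref{eq:horizonBC}, and I would state it explicitly. Equating the two values of the Wronskian gives \eqref{eq:fluxidentity}.

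From \eqref{eq:fluxidentity}, $\abs{\mathcal R}^2-1=-(\omega-e\Phi_H)\abs{\mathcal T}^2/k$, and the two hypotheses $k>0$ and $\mathcal T\neq0$ make $\abs{\mathcal T}^2/k$ strictly positive. So $\abs{\mathcal R}^2>1$ holds exactly when $\omega-e\Phi_H<0$, which is the equivalence \eqref{eq:covfreqnegative}. The same identity gives the converse directions: if $\omega-e\Phi_H>0$ then $\abs{\mathcal R}<1$, and if $\omega-e\Phi_H=0$ then $\abs{\mathcal R}=1$.

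For \eqref{eq:superradiantwindow}, combine $\omega>0$ with $\omega<e\Phi_H$. This window is nonempty precisely when $e\Phi_H>0$.

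I would add one remark on compatibility with $k>0$. The condition $k>0$ requires $\omega^2>V_\infty$. For a massive field this means the window is effectively $\sqrt{V_\infty}<\omega<e\Phi_H$. The theorem as stated already assumes $k>0$, so this does not change the result, and I would note it rather than restrict the statement.

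The main obstacle is not algebraic. It lies in justifying the asymptotic forms used above, in particular that the horizon solution is a pure wave with real local wavenumber $\omega-e\Phi_H$ and that the cross terms at infinity vanish. I would handle this by assuming sufficiently fast decay of $V_{\rm eff}-V_\infty$ and of $A_t$ at infinity, so that the asymptotic forms are exact in the limit. The Wronskian is constant, so its limiting values at the two ends are well defined and the argument goes through.
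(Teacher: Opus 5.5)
Your proposal is correct and follows the paper's proof: both read the equivalence directly off the flux identity \eqref{eq:fluxidentity}, using that $\abs{\mathcal T}^2/k>0$ when $k>0$ and $\mathcal T\neq0$. Your recheck of the two Wronskian boundary values and your explicit decay assumptions on $V_{\rm eff}$ and $A_t$ spell out what the paper assumes when it derives \eqref{eq:Winfinity}--\eqref{eq:Whorizon}.
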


\begin{proof}
All factors multiplying $-(\omega-e\Phi_H)$ on the right-hand side of
\eqref{eq:fluxidentity} are nonnegative.  Therefore
$\abs{\mathcal R}^2>1$ is equivalent to
$\omega-e\Phi_H<0$ when $\mathcal T\neq0$.
\end{proof}

\begin{corollary}[Neutral no-superradiance result]
\label{cor:neutral}
For $e=0$ and $\omega>0$, a static background of the form
\eqref{eq:metric} cannot superradiantly amplify a scalar wave:
\begin{equation}
  \abs{\mathcal R}^2
  =1-\frac{\omega}{k}\abs{\mathcal T}^2
  \leq1.
  \label{eq:neutralnoSR}
\end{equation}
\end{corollary}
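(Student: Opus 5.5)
The plan is to specialize the flux identity \eqref{eq:fluxidentity} to $e=0$ and read off the sign. No new analysis is needed, since Theorem~\ref{thm:superradiance} and the Wronskian relation already hold for arbitrary charge.

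First I will set $e=0$ in \eqref{eq:horizonBC}. The horizon boundary condition then becomes $\psi\sim\mathcal T\,\ee^{-\ii\omega r_*}$. The evaluation \eqref{eq:Whorizon} gives $W_H=-2\ii\omega\abs{\mathcal T}^2$. The asymptotic evaluation \eqref{eq:Winfinity} is unchanged, because it does not involve $e$. Conservation of the Wronskian \eqref{eq:Wderivative} with $\mathcal I=1$ then gives $\abs{\mathcal R}^2=1-(\omega/k)\abs{\mathcal T}^2$, which is \eqref{eq:neutralnoSR} as an equality.

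Next I will check the sign. The hypotheses of the scattering problem give $k>0$ by \eqref{eq:kdef}, and the corollary assumes $\omega>0$. Since $\abs{\mathcal T}^2\ge0$, the subtracted term is nonnegative, so $\abs{\mathcal R}^2\le1$. Equality holds only when $\mathcal T=0$, and in that case there is no amplification either. Equivalently, Theorem~\ref{thm:superradiance} requires $\omega-e\Phi_H<0$, which for $e=0$ reads $\omega<0$ and contradicts $\omega>0$.

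The only point that needs care is the scope of ``a static background of the form \eqref{eq:metric}''. The argument should hold for every admissible $h$, $\mu$, $\xi$, and $A_t$. It does, because $V_{\rm eff}$ in \eqref{eq:Veff} is real for real $F$, $h>0$, and real couplings. Moreover, $V_{\rm eff}$ enters the Wronskian argument only through its reality and through the value $V_\infty$ that defines $k$. If $A_t$ is nonzero but $e=0$, then $A_t$ drops out of \eqref{eq:schrodinger} entirely. I expect no real obstacle beyond stating these reality and positivity assumptions explicitly: $k$ real and positive, and $\omega$ real for the Wronskian conservation.
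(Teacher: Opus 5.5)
Your proposal is correct and is essentially the paper's argument: the corollary is the flux identity \eqref{eq:fluxidentity} at $e=0$, where $\omega>0$, $k>0$ and $\abs{\mathcal T}^2\geq0$ make the subtracted term nonnegative, exactly as in the proof of Theorem~\ref{thm:superradiance}. One minor refinement to your scope remark: $V_{\rm eff}$ enters not only through its reality and through $V_\infty$, but also through its vanishing at the horizon ($V_{\rm eff}=O(z)$), and this is what justifies the plane-wave horizon condition \eqref{eq:horizonBC} that you use.
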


The derivation did not assume that $V_{\rm eff}$ is positive.  A negative
well changes $\mathcal T$ and can support bound states, but it does not alter
the sign relation in \eqref{eq:fluxidentity}.  Superradiance is controlled
by the sign of the horizon channel, not by the sign of the local potential.

\section{Near-horizon Frobenius index and a residue diagnostic}
\label{sec:residue}

The complex-analytic structure of the horizon can be stated precisely.
Let
\begin{equation}
  z=x-x_H.
  \label{eq:zdef}
\end{equation}
For a nonextremal horizon and regular $A_t$ and $h$,
\begin{align}
  F(z)&=2\kappa z+O(z^2),
  \label{eq:Fnear}\\
  A_t(z)&=\Phi_H+A_1z+O(z^2),
  \label{eq:Anear}\\
  V_{\rm eff}(z)&=O(z).
  \label{eq:Vnear}
\end{align}
Integrating \eqref{eq:tortoise},
\begin{equation}
  r_*
  =\frac{1}{2\kappa}\ln\left(\frac{z}{z_0}\right)+O(z),
  \label{eq:rstarnear}
\end{equation}
where $z_0$ is an irrelevant positive constant.  Equation
\eqref{eq:schrodinger} then has the local solutions
\begin{equation}
  \psi_\pm
  =z^{\pm\ii(\omega-e\Phi_H)/(2\kappa)}
  \left[1+O(z)\right].
  \label{eq:Frobenius}
\end{equation}
The future-horizon ingoing branch is the minus sign.  Its logarithmic
differential is
\begin{equation}
  \dd\ln\psi_{\rm in}
  =-\frac{\ii(\omega-e\Phi_H)}{2\kappa}
   \frac{\dd z}{z}
  +O(\dd z).
  \label{eq:dlog}
\end{equation}
Although $\psi_{\rm in}$ itself is generally multivalued and has a branch
point, the one-form $\dd\ln\psi_{\rm in}$ is meromorphic at $z=0$.
Therefore
\begin{equation}
  \boxed{
  \Res_{z=0}\bigl(\dd\ln\psi_{\rm in}\bigr)
  =-\frac{\ii(\omega-e\Phi_H)}{2\kappa}
  =-\frac{\ii(\omega-e\Phi_H)}{4\pi T_H}
  },
  \label{eq:residue}
\end{equation}
where $T_H=\kappa/(2\pi)$.

The residue of a meromorphic one-form is invariant under an analytic local
coordinate change $\widetilde z=cz+O(z^2)$ with $c\neq0$.  Hence
\eqref{eq:residue} is not an artifact of the chosen radial coordinate.
Moreover,
\begin{equation}
  \Res_H(\dd\ln\psi_{\rm in})=0
  \quad\Longleftrightarrow\quad
  \omega=e\Phi_H,
  \label{eq:residuethreshold}
\end{equation}
so the vanishing of the horizon residue identifies the superradiant
threshold.  This is the appropriate Laurent/residue statement: it uses the
simple pole of the logarithmic derivative rather than a formal Laurent
expansion of a branch-valued wave function.

\section{Application to the charged two-dimensional string black hole}
\label{sec:stringBH}

We now apply the general result to the charged two-dimensional black hole
obtained in string theory and described by an exact
$[SL(2,\mathbb R)\times U(1)]/U(1)$ coset
\cite{McGuigan1992,Giveon2003,GiveonKutasov2006}.  In a
Schwarzschild-like radial coordinate $r$, the Lorentzian fields can be
written as
\begin{align}
  \dd s^2
  &=-f(r)\dd t^2
  +\frac{\dd r^2}{\lambda^2r^2f(r)},
  \label{eq:stringmetric}\\
  f(r)
  &=1-\frac{2m}{r}+\frac{q_b^2}{r^2}
   =\left(1-\frac{r_+}{r}\right)
    \left(1-\frac{r_-}{r}\right),
  \label{eq:stringf}\\
  A_t(r)&=\frac{q_b}{r},
  \qquad
  \Phi_{\rm dil}(r)
  =\Phi_0-\frac{1}{2}\ln\left(\frac{r}{r_0}\right).
  \label{eq:stringfields}
\end{align}
Here $\lambda>0$ is the linear-dilaton slope, while $q_b$ denotes the
black-hole charge parameter and must not be confused with the probe charge
$e$; $\Phi_0$ and $r_0$ fix the irrelevant additive normalization of the
dilaton.  The two horizons are
\begin{equation}
  r_\pm=m\pm\sqrt{m^2-q_b^2},
  \qquad
  r_+r_-=q_b^2.
  \label{eq:horizons}
\end{equation}
We take $m>\abs{q_b}$ in the nonextremal calculation and choose $q_b>0$ for
definiteness.

\subsection{Canonical radial coordinate}

Define
\begin{equation}
  x=\frac{1}{\lambda}\ln\left(\frac{r}{r_+}\right),
  \qquad
  r=r_+\ee^{\lambda x},
  \qquad
  \dd r=\lambda r\,\dd x.
  \label{eq:xtransform}
\end{equation}
Then
\begin{equation}
  \frac{\dd r^2}{\lambda^2r^2f(r)}
  =\frac{\dd x^2}{F(x)}
  \label{eq:radialtransform}
\end{equation}
and the metric takes the canonical form \eqref{eq:metric}, with the outer
horizon at $x=0$.  Introduce the dimensionless charge-to-mass variable
\begin{equation}
  a=\frac{q_b}{r_+}
  =\sqrt{\frac{r_-}{r_+}},
  \qquad 0\leq a<1.
  \label{eq:adef}
\end{equation}
Equations \eqref{eq:stringf} and \eqref{eq:stringfields} become
\begin{align}
  F(x)
  &=\left(1-\ee^{-\lambda x}\right)
    \left(1-a^2\ee^{-\lambda x}\right),
  \label{eq:Fx}\\
  A_t(x)&=a\ee^{-\lambda x}.
  \label{eq:Atx}
\end{align}
This gauge has $A_t(\infty)=0$, so
\begin{equation}
  \Phi_H=A_t(0)-A_t(\infty)=a.
  \label{eq:PhiHa}
\end{equation}
Only the potential difference is physical; a constant shift of $A_t$ must
be accompanied by the corresponding frequency shift.

\subsection{Temperature-controlled superradiant edge}

Differentiate \eqref{eq:Fx}:
\begin{align}
  \frac{\dd F}{\dd x}
  &=
  \lambda\ee^{-\lambda x}
  \left[
    1+a^2-2a^2\ee^{-\lambda x}
  \right],
  \label{eq:Fprime}\\
  F'(0)&=\lambda(1-a^2).
  \label{eq:FprimeH}
\end{align}
It follows that
\begin{equation}
  \kappa=\frac{F'(0)}{2}
  =\frac{\lambda}{2}(1-a^2),
  \qquad
  T_H=\frac{\lambda}{4\pi}(1-a^2).
  \label{eq:temperature}
\end{equation}
Solving the second relation for $a$ gives
\begin{equation}
  a=\sqrt{1-\frac{4\pi T_H}{\lambda}}.
  \label{eq:aofT}
\end{equation}
Combining \eqref{eq:superradiantwindow}, \eqref{eq:PhiHa}, and
\eqref{eq:aofT}, we obtain
\begin{equation}
  \boxed{
  0<\omega
  <e\sqrt{1-\frac{4\pi T_H}{\lambda}}
  }
  \qquad
  (e>0,\ 0<T_H\leq\lambda/4\pi).
  \label{eq:temperaturewindow}
\end{equation}
For an asymptotically massive or otherwise gapped field,
\eqref{eq:temperaturewindow} must be intersected with the propagation
condition $k^2>0$ from \eqref{eq:kdef}; the displayed interval is complete
for the massless probe used in Section~\ref{sec:numerics}.
Thus the upper edge of the amplification interval is fixed by the same
parameter that controls the Hawking temperature.  The window closes in the
neutral limit $a\to0$, for which $T_H\to\lambda/(4\pi)$, and approaches
$0<\omega<e$ as $a\to1^-$.  The exactly extremal geometry has
$\kappa=0$ and a degenerate horizon, so the residue derivation
\eqref{eq:residue} must not be applied at $a=1$ without a separate
near-horizon analysis.  Equation \eqref{eq:temperaturewindow} is a
nonextremal result with a well-defined limiting value.

\section{Amplification is not instability: an exact cavity identity}
\label{sec:instability}

To determine whether amplification can become an instability, impose a
perfectly reflecting boundary at a finite exterior point $x=x_m$:
\begin{equation}
  \psi(r_*^m)=0,
  \qquad
  r_*^m=r_*(x_m).
  \label{eq:mirrorBC}
\end{equation}
Let
\begin{equation}
  \omega=\omega_R+\ii\omega_I,
  \qquad \omega_I>0,
  \label{eq:complexomega}
\end{equation}
represent a putative exponentially growing mode.  At the future horizon,
\begin{equation}
  \psi\sim
  \ee^{-\ii(\omega-e\Phi_H)r_*}
  =
  \ee^{-\ii(\omega_R-e\Phi_H)r_*}
  \ee^{\omega_I r_*}.
  \label{eq:growinghorizon}
\end{equation}
Because $r_*\to-\infty$, both $\psi$ and $\psi'$ vanish at the lower
integration limit.

Multiply \eqref{eq:schrodinger} by $\psi^*$ and integrate from the horizon
to the mirror:
\begin{equation}
  \int_{-\infty}^{r_*^m}
  \psi^*\psi''\,\dd r_*
  +\int_{-\infty}^{r_*^m}
  \left[(\omega-eA_t)^2-V_{\rm eff}\right]
  \abs{\psi}^2\,\dd r_*=0.
  \label{eq:integralstart}
\end{equation}
Integration by parts gives
\begin{equation}
  \int\psi^*\psi''\,\dd r_*
  =
  \left[\psi^*\psi'\right]_{-\infty}^{r_*^m}
  -\int\abs{\psi'}^2\,\dd r_*.
  \label{eq:parts}
\end{equation}
The boundary term vanishes by \eqref{eq:mirrorBC} and
\eqref{eq:growinghorizon}.  Therefore
\begin{equation}
  \int_{-\infty}^{r_*^m}
  \left(\abs{\psi'}^2+V_{\rm eff}\abs{\psi}^2\right)\dd r_*
  =
  \int_{-\infty}^{r_*^m}
  (\omega-eA_t)^2\abs{\psi}^2\,\dd r_*.
  \label{eq:masterintegral}
\end{equation}
Since
\begin{equation}
  (\omega-eA_t)^2
  =
  (\omega_R-eA_t)^2-\omega_I^2
  +2\ii\omega_I(\omega_R-eA_t),
  \label{eq:complexsquare}
\end{equation}
the imaginary part of \eqref{eq:masterintegral} is
\begin{equation}
  2\omega_I
  \int_{-\infty}^{r_*^m}
  (\omega_R-eA_t)\abs{\psi}^2\,\dd r_*=0.
  \label{eq:imaginarypart}
\end{equation}
Define
\begin{equation}
  N=\int_{-\infty}^{r_*^m}\abs{\psi}^2\dd r_*,
  \qquad
  \avg{X}
  =\frac{1}{N}
   \int_{-\infty}^{r_*^m}X\abs{\psi}^2\dd r_*.
  \label{eq:weightedaverage}
\end{equation}
Because $\omega_I>0$, \eqref{eq:imaginarypart} gives the exact frequency
constraint
\begin{equation}
  \boxed{\omega_R=e\avg{A_t}}.
  \label{eq:omegarealaverage}
\end{equation}
This relation is gauge covariant: under a constant shift
$A_t\mapsto A_t+C$, the separated frequency changes as
$\omega_R\mapsto\omega_R+eC$.  The variance and all potential differences
used below are gauge invariant.

The real part of \eqref{eq:masterintegral} gives
\begin{equation}
  \int
  \left(\abs{\psi'}^2+V_{\rm eff}\abs{\psi}^2\right)\dd r_*
  =
  \int(\omega_R-eA_t)^2\abs{\psi}^2\dd r_*
  -\omega_I^2N.
  \label{eq:realpart}
\end{equation}
Using \eqref{eq:omegarealaverage},
\begin{align}
  \frac{1}{N}
  \int(\omega_R-eA_t)^2\abs{\psi}^2\dd r_*
  &=
  e^2\avg{\left(A_t-\avg{A_t}\right)^2}
  \nonumber\\
  &=e^2\Var(A_t).
  \label{eq:variance}
\end{align}
Consequently,
\begin{equation}
  \boxed{
  \omega_I^2
  =
  e^2\Var(A_t)
  -
  \frac{
    \displaystyle\int_{-\infty}^{r_*^m}
    \left(\abs{\psi'}^2+V_{\rm eff}\abs{\psi}^2\right)\dd r_*
  }{
    \displaystyle\int_{-\infty}^{r_*^m}\abs{\psi}^2\dd r_*
  }
  }.
  \label{eq:growthidentity}
\end{equation}

\begin{theorem}[Neutral no-growth theorem]
\label{thm:nogrowth}
If $e=0$, $V_{\rm eff}\geq0$, the future-horizon condition is imposed, and
the outer boundary is reflecting, there is no nontrivial mode with
$\omega_I>0$.
\end{theorem}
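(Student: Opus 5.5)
The plan is to read the result off the exact cavity identity \eqref{eq:growthidentity}, which was derived for any putative mode with $\omega_I>0$ satisfying the mirror condition \eqref{eq:mirrorBC} and the future-horizon behaviour \eqref{eq:growinghorizon}. Suppose, for contradiction, that a nontrivial solution $\psi\not\equiv0$ exists with $\omega_I>0$. Then $N=\int\abs{\psi}^2\dd r_*$ is finite and strictly positive. Finiteness holds because $\abs{\psi}\sim\ee^{\omega_I r_*}$ as $r_*\to-\infty$ and the upper limit is finite; positivity holds because a solution of a second-order ODE cannot vanish on an interval without vanishing identically. Hence the weighted average \eqref{eq:weightedaverage} and the identity \eqref{eq:growthidentity} are legitimate.

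Setting $e=0$ makes the variance term $e^2\Var(A_t)$ vanish identically, so \eqref{eq:growthidentity} reduces to
\begin{equation*}
  \omega_I^2\,N=-\int_{-\infty}^{r_*^m}\left(\abs{\psi'}^2+V_{\rm eff}\abs{\psi}^2\right)\dd r_*.
\end{equation*}
By the hypothesis $V_{\rm eff}\geq0$, the right-hand side is $\leq0$, whereas the left-hand side is strictly positive. This is a contradiction. Equivalently, one can work directly from \eqref{eq:realpart} with $e=0$, where the term $(\omega_R-eA_t)^2$ becomes $\omega_R^2$. In that version the imaginary-part relation \eqref{eq:imaginarypart} forces $\omega_R N=0$, hence $\omega_R=0$, and the same sign contradiction follows. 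The neutral case does not even need the averaged form \eqref{eq:omegarealaverage}.

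The only step needing care is the justification of the integration by parts \eqref{eq:parts}, namely that the boundary term $\psi^*\psi'$ vanishes at both ends. At the mirror this is immediate from \eqref{eq:mirrorBC}. At the horizon I would invoke the Frobenius form \eqref{eq:Frobenius} with $\omega$ complex: $\psi_{\rm in}=z^{-\ii\omega/(2\kappa)}[1+O(z)]$, so $\abs{\psi}\sim z^{\omega_I/(2\kappa)}\to0$. Since $\dd/\dd r_*=F\,\dd/\dd z=O(z)\,\dd/\dd z$, the derivative $\psi'$ is of the same order, and so $\psi^*\psi'\to0$. The same decay also gives integrability of $\abs{\psi'}^2$ and $V_{\rm eff}\abs{\psi}^2$, using $V_{\rm eff}=O(z)$ from \eqref{eq:Vnear}. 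This is where the hypothesis that the future-horizon (ingoing) branch is imposed enters. The outgoing branch grows like $z^{-\omega_I/(2\kappa)}$ and would spoil the boundary term, so excluding it is essential.
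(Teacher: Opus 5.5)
Your proof is correct and follows the paper's argument. Setting $e=0$ removes the variance term in \eqref{eq:growthidentity}, and the nonnegative Rayleigh quotient then forces $\omega_I^2\le 0$; your extra checks (finiteness and positivity of $N$, and the vanishing horizon boundary term from the decaying ingoing Frobenius branch) spell out what the paper asserts in one line after \eqref{eq:growinghorizon}.
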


\begin{proof}
For $e=0$, the first term on the right-hand side of
\eqref{eq:growthidentity} vanishes.  The remaining Rayleigh quotient is
nonnegative.  Hence $\omega_I^2\leq0$, contradicting $\omega_I>0$ unless
$\psi$ is trivial.
\end{proof}

The theorem does not exclude an ordinary tachyonic instability when
$V_{\rm eff}$ is sufficiently negative.  It shows instead that such an
instability is not neutral superradiance.

\subsection{Necessary interval and growth bound in the charged string model}

For \eqref{eq:Atx}, $A_t$ decreases monotonically from
$A_H=a$ to
\begin{equation}
  A_m=a\ee^{-\lambda x_m}
  \label{eq:Am}
\end{equation}
at the mirror.  If $e>0$, the weighted-average identity
\eqref{eq:omegarealaverage} implies
\begin{equation}
  \boxed{
  ea\ee^{-\lambda x_m}<\omega_R<ea
  }.
  \label{eq:cavityinterval}
\end{equation}
The inequalities are strict for a nontrivial mode with support throughout
the open interval.  Thus every growing cavity mode must lie in the
superradiant range, but the converse is not established.

For any real quantity restricted to
$A_m\leq A_t\leq A_H$, its variance satisfies
\begin{equation}
  \Var(A_t)\leq\frac{(A_H-A_m)^2}{4}.
  \label{eq:variancebound}
\end{equation}
If $V_{\rm eff}\geq0$, the Rayleigh quotient in
\eqref{eq:growthidentity} is also nonnegative.  We therefore obtain the
parameter-level upper bound
\begin{equation}
  \boxed{
  0<\omega_I
  \leq
  \frac{\abs e}{2}
  \left(A_H-A_m\right)
  =
  \frac{\abs e\,a}{2}
  \left(1-\ee^{-\lambda x_m}\right)
  }.
  \label{eq:growthbound}
\end{equation}
Any numerical mode claimed to grow faster than
\eqref{eq:growthbound}, under the stated assumptions, fails an exact
consistency test.

\section{Numerical scattering check}
\label{sec:numerics}

We perform a direct real-frequency integration for the minimally coupled
massless probe,
\begin{equation}
  h=1,\qquad \mu=\xi=0,\qquad V_{\rm eff}=0,
  \label{eq:numericalmodel}
\end{equation}
on the background \eqref{eq:Fx}--\eqref{eq:Atx}.  Define
\begin{equation}
  p=\frac{\dd\psi}{\dd r_*}.
  \label{eq:pdef}
\end{equation}
Since $\dd/\dd r_*=F\,\dd/\dd x$, equation
\eqref{eq:schrodinger} is equivalent to the first-order system
\begin{align}
  \frac{\dd\psi}{\dd x}
  &=\frac{p}{F(x)},
  \label{eq:num1}\\
  \frac{\dd p}{\dd x}
  &=-\frac{[\omega-eA_t(x)]^2}{F(x)}\psi.
  \label{eq:num2}
\end{align}
At $x=x_0\ll1$, the ingoing normalization is chosen as
\begin{equation}
  \psi(x_0)=1,\qquad
  p(x_0)=-\ii(\omega-ea).
  \label{eq:numinitial}
\end{equation}
At a sufficiently large $x=x_{\max}$, $F\simeq1$ and $A_t\simeq0$.
Writing
\begin{equation}
  \psi
  =\mathcal I\ee^{-\ii\omega r_*}
  +\mathcal R\ee^{+\ii\omega r_*},
  \qquad
  p
  =-\ii\omega\mathcal I\ee^{-\ii\omega r_*}
  +\ii\omega\mathcal R\ee^{+\ii\omega r_*},
  \label{eq:nummatching}
\end{equation}
we solve algebraically for the phase-dressed amplitudes:
\begin{align}
  \mathcal I\ee^{-\ii\omega r_*}
  &=\frac{1}{2}\left(\psi+\frac{\ii p}{\omega}\right),
  \label{eq:incidentextract}\\
  \mathcal R\ee^{+\ii\omega r_*}
  &=\frac{1}{2}\left(\psi-\frac{\ii p}{\omega}\right).
  \label{eq:reflectedextract}
\end{align}
Because the horizon amplitude in \eqref{eq:numinitial} is unity, the
amplitude ratios used in \eqref{eq:fluxidentity} are
\begin{equation}
  \abs{\mathcal R_{\rm ratio}}^2
  =
  \frac{
    \abs{\psi-\ii p/\omega}^2
  }{
    \abs{\psi+\ii p/\omega}^2
  },
  \qquad
  \abs{\mathcal T_{\rm ratio}}^2
  =
  \frac{4}{
    \abs{\psi+\ii p/\omega}^2
  }.
  \label{eq:ratios}
\end{equation}

Table~\ref{tab:numerics} uses
\begin{equation}
  \lambda=1,\qquad a=0.8,\qquad e=1,
  \qquad x_0=10^{-7},\qquad x_{\max}=32.
  \label{eq:numparameters}
\end{equation}
The corresponding temperature is
$T_H=(1-0.8^2)/(4\pi)=0.0286479$, and the predicted threshold is
$\omega_{\rm SR}=ea=0.8$.  We integrate
\eqref{eq:num1}--\eqref{eq:num2} with an eighth-order adaptive
Dormand--Prince method, relative tolerance $3\times10^{-11}$, absolute
tolerance $3\times10^{-13}$, and maximum step $0.03$.

\begin{table}[htbp]
\centering
\caption{Direct scattering check for the charged two-dimensional string
black hole.  The Wronskian residual is
$\Delta_W=\abs{\mathcal R}^2-
[1-(\omega-ea)\abs{\mathcal T}^2/\omega]$.}
\label{tab:numerics}
\begin{tabular}{@{}rrrrr@{}}
\toprule
$\omega$ &
$\abs{\mathcal R}^2$ &
$\abs{\mathcal T}^2$ &
$Z=\abs{\mathcal R}^2-1$ &
$\abs{\Delta_W}$\\
\midrule
0.15 & 3.2235429522 & 0.5131252967 & 2.2235429522 & $5.2\times10^{-11}$\\
0.25 & 3.6713107257 & 1.2142321480 & 2.6713107257 & $1.1\times10^{-10}$\\
0.40 & 3.0705032908 & 2.0705032907 & 2.0705032908 & $1.1\times10^{-10}$\\
0.70 & 1.9437549616 & 6.6062847313 & 0.9437549616 & $1.0\times10^{-11}$\\
0.79 & 1.1491320912 & 11.7814352066 & 0.1491320912 & $1.1\times10^{-13}$\\
0.90 & 0.0810455711 & 8.2705898604 & $-0.9189544289$ & $1.1\times10^{-11}$\\
\bottomrule
\end{tabular}
\end{table}

Every sampled frequency below $0.8$ is amplified, while the point above
$0.8$ is absorbed.  The maximum tabulated Wronskian residual is
$1.1\times10^{-10}$.  Repeating the calculation with
$x_0=10^{-6}$ and $10^{-8}$ changes the displayed amplification factors
only beyond the quoted stable digits.  The quantity
$\abs{\mathcal T}^2$ is an amplitude ratio, not a positive transmission
probability; its flux weight $(\omega-ea)/\omega$ becomes negative in the
superradiant regime.

\section{Falsifiable consequences and limitations}
\label{sec:tests}

The preceding derivations lead to concrete failure tests:
\begin{enumerate}[label=(\arabic*),leftmargin=2.2em]
  \item A real-frequency calculation satisfying the assumptions of
  Section~\ref{sec:flux} must obey \eqref{eq:fluxidentity}.  A violation is
  evidence of inconsistent asymptotic normalization, a sign error, or
  insufficient numerical accuracy.

  \item A neutral static problem cannot have
  $\abs{\mathcal R}>1$ merely because $V_{\rm eff}<0$ somewhere.  Such a
  claim would contradict Corollary~\ref{cor:neutral}.

  \item A growing charged cavity mode in the string background must satisfy
  both \eqref{eq:cavityinterval} and \eqref{eq:growthbound}.  A mode outside
  either bound is spurious under the stated boundary conditions.

  \item At the real-frequency threshold $\omega=e\Phi_H$, the horizon
  residue \eqref{eq:residue} must vanish.  This can be checked without
  computing a global reflection coefficient.

  \item In the limits $e\to0$ or $a\to0$, the charged amplification window
  must close continuously.
\end{enumerate}

Several limitations remain.  First, the calculation neglects the
backreaction of the scalar on the metric, dilaton, and gauge field.  Second,
the cavity identities give necessary conditions and a growth bound, not a
proof that an unstable eigenvalue exists for every mirror position.  Third,
the numerical example uses a minimally coupled probe with $h=1$; string
tachyons or dimensionally reduced matter can carry a nontrivial $h$, which
must be inserted through \eqref{eq:Veff}.  Fourth, the degenerate extremal
horizon requires a separate analysis.  Finally, the nonlinear endpoint of a
growing mode lies outside linear perturbation theory.

\section{Conclusion}

We have reconstructed the scalar perturbation problem for static
two-dimensional black holes from the action to the scattering and spectral
conditions.  The exact field redefinition shows which terms genuinely
contribute to the effective potential and why an angular barrier cannot be
borrowed from a higher-dimensional spacetime without an explicit
dimensional reduction.

The Wronskian identity proves that a negative potential well is not the
source of superradiant amplification.  The decisive quantity is the
gauge-covariant horizon frequency $\omega-e\Phi_H$.  For the charged
two-dimensional string black hole, this yields the explicit thermodynamic
edge
\[
  \omega_{\rm SR}
  =e\sqrt{1-\frac{4\pi T_H}{\lambda}}.
\]
The same threshold is encoded locally by the vanishing residue of
$\dd\ln\psi$ at the horizon.

For a reflecting outer boundary, the complex-frequency integral identity
separates the two ingredients needed for a genuine instability.  The gauge
potential supplies the positive variance term, whereas gradients and a
nonnegative effective potential supply a stabilizing Rayleigh quotient.
This produces a neutral no-growth theorem, a necessary charged frequency
interval, and an upper bound on the growth rate.  The resulting framework
distinguishes three physically different phenomena--amplification,
potential-driven instability, and a confined superradiant instability--by
equations that can be checked analytically or numerically.

\section*{Acknowledgments}

This work was supported by the National Natural Science Foundation of China
(General Program) under Project Approval No.~11574089.

\section*{Author contributions}

Wen-Xiang Chen conceived the study, performed the analytical derivations and
numerical checks, and wrote the manuscript.

\section*{Data availability}

No external data were used.  All numerical parameters, evolution equations,
boundary conditions, and matching formulas required to reproduce
Table~\ref{tab:numerics} are provided in Section~\ref{sec:numerics}.

\section*{Conflict of interest}

The author declares no conflict of interest.

\appendix

\section{Noether current and its relation to the Wronskian}
\label{app:current}

The global phase symmetry $\Psi\mapsto\ee^{\ii\alpha}\Psi$ of
\eqref{eq:action} gives the conserved current
\begin{equation}
  j^\mu
  =\frac{h}{2\ii}
  \left[
    \Psi^*\cD^\mu\Psi
    -\Psi(\cD^\mu\Psi)^*
  \right],
  \qquad
  \nabla_\mu j^\mu=0.
  \label{eq:current}
\end{equation}
For the separated field and radial metric,
\begin{align}
  j^x
  &=
  \frac{hF}{2\ii}
  \left(R^*\frac{\dd R}{\dd x}
  -R\frac{\dd R^*}{\dd x}\right)
  \nonumber\\
  &=
  \frac{h}{2\ii}
  \left(R^*\frac{\dd R}{\dd r_*}
  -R\frac{\dd R^*}{\dd r_*}\right).
  \label{eq:jxR}
\end{align}
Using $R=\psi/\sqrt h$ and the reality of $h$, all derivatives of $h$
cancel in the antisymmetric combination:
\begin{equation}
  j^x
  =\frac{1}{2\ii}
  \left(\psi^*\psi'-\psi\psi'^*\right)
  =\frac{W}{2\ii}.
  \label{eq:jxW}
\end{equation}
Thus Wronskian conservation in Section~\ref{sec:flux} is the radial Noether
flux conservation law.

\section{Where an angular potential actually comes from}
\label{app:reduction}

Consider the spherically symmetric sector of a $D$-dimensional spacetime,
whose two-dimensional orbit-space metric has the form
\begin{equation}
  \dd s_2^2=-F(r)\dd t^2+\frac{\dd r^2}{F(r)}.
  \label{eq:Dmetric}
\end{equation}
After expanding a $D$-dimensional scalar in harmonics on
$S^{D-2}$, the radial action acquires
\begin{equation}
  h(r)=r^{D-2},
  \qquad
  U(r)=\mu^2+\frac{\ell(\ell+D-3)}{r^2}.
  \label{eq:Dweight}
\end{equation}
Let $n=D-2$.  Since $\dd/\dd r_*=F\,\dd/\dd r$,
\begin{equation}
  \frac{\dd\ln h}{\dd r_*}
  =\frac{nF}{r}.
  \label{eq:HdotD}
\end{equation}
Differentiating once more,
\begin{equation}
  \frac{\dd^2\ln h}{\dd r_*^2}
  =
  nF\frac{\dd}{\dd r}\left(\frac{F}{r}\right)
  =
  nF\left(\frac{F'}{r}-\frac{F}{r^2}\right).
  \label{eq:HddotD}
\end{equation}
Substitution in \eqref{eq:Veff} yields
\begin{equation}
  V_{\rm eff}^{(D)}
  =
  F\left[
    \mu^2+\frac{\ell(\ell+D-3)}{r^2}
  \right]
  +\frac{nFF'}{2r}
  +\frac{n(n-2)F^2}{4r^2}.
  \label{eq:VD}
\end{equation}
For $D=4$, $n=2$, and
\begin{equation}
  V_{\rm eff}^{(4)}
  =
  F\left[
    \mu^2+\frac{\ell(\ell+1)}{r^2}
    +\frac{F'}{r}
  \right],
  \label{eq:V4D}
\end{equation}
which is the standard scalar potential.  Equation \eqref{eq:V4D} is a
result of the spherical area weight and harmonic eigenvalue.  Neither
ingredient exists in an intrinsic $(1+1)$-dimensional model with constant
$h$.

\section{Compact assumption and verification ledger}
\label{app:ledger}

\begin{table}[H]
\centering
\caption{Logical status of the central claims.}
\begin{tabular}{@{}>{\raggedright\arraybackslash}p{0.19\textwidth}
>{\raggedright\arraybackslash}p{0.54\textwidth}
>{\raggedright\arraybackslash}p{0.16\textwidth}@{}}
\toprule
Item & Evidence or test & Status\\
\midrule
Definitions &
Metric, gauge, charge convention, scalar weight, boundary conditions, and
nonextremal domain are explicit in Sections~\ref{sec:setup} and
\ref{sec:flux}. & Pass\\
\addlinespace
Radial reduction &
Equations \eqref{eq:radialx}--\eqref{eq:Veff} display every
differentiation and field-redefinition step. & Pass\\
\addlinespace
Flux identity &
Derived independently from the conserved Wronskian and the Noether current;
numerically reproduced with residual below $1.1\times10^{-10}$. & Pass\\
\addlinespace
Known limits &
$e\to0$, $a\to0$, $h\to1$, and the four-dimensional spherical reduction
all reproduce their expected limits. & Pass\\
\addlinespace
Instability identity &
Exact for any eigenfunction satisfying the stated horizon and mirror
conditions; existence of such an eigenfunction for arbitrary parameters is
not claimed. & Pass\\
\addlinespace
Extremal limit &
The nonextremal limit is controlled, but the exactly degenerate horizon
requires a separate calculation. & Pending\\
\addlinespace
Nonlinear endpoint &
Backreaction and saturation are outside the fixed-background linear model.
& Pending\\
\bottomrule
\end{tabular}
\end{table}

\end{document}